\documentclass[conference,letterpaper]{IEEEtran}

\usepackage[utf8]{inputenc} 
\usepackage[T1]{fontenc}
\usepackage{url}
\usepackage{ifthen}
\usepackage{cite}
\usepackage[cmex10]{amsmath} 
\usepackage{graphicx} 
\usepackage{amssymb}
\usepackage{latexsym}
\usepackage{amsmath}
\usepackage{amsthm}
\usepackage{empheq}
\usepackage{bm}
\usepackage{booktabs}
\usepackage[dvipsnames]{xcolor}
\usepackage{pagecolor}
\usepackage{caption}
\usepackage{subcaption}
\usepackage{tikz,pgfplots}
\usetikzlibrary{matrix,decorations.pathreplacing,positioning}
\usepackage{enumitem}
\usepackage{wasysym}
\usepackage{mdframed}
\usepackage{ragged2e}
\usepackage{multirow}
\usepackage{array}
\usepackage{hhline}
\usepackage{enumitem}

\usepackage{arydshln}

\usepackage{centernot}
\usepackage{mathtools}
\usepackage{stmaryrd}
\usepackage{relsize}

\numberwithin{equation}{section}

\newtheorem{theorem}{Theorem}[section]

\newtheorem{proposition}[theorem]{Proposition}
\newtheorem{lemma}[theorem]{Lemma}

\theoremstyle{definition}
\newtheorem{definition}[theorem]{Definition}
\newtheorem{example}[theorem]{Example}

\newtheorem{remark}[theorem]{Remark}

\newtheorem{construction}[theorem]{Construction}

\newcommand{\ket}[1]{\lvert#1\rangle}

\newcommand{\F}{\mathbb{F}}

\newcommand{\rmv}[1]{}

\usepackage{xcolor}
\usepackage{soul}

\begin{document}
\title{List-Decodable Folded Quantum Hermitian Codes}

\author{%
Julia Shapiro\IEEEauthorrefmark{1} and \IEEEauthorblockN{Gretchen L. Matthews\IEEEauthorrefmark{2}}
 \IEEEauthorblockA{
 Department of Mathematics \\
                    Virginia Tech\\
                    Blacksburg, VA 24061 USA\\
                    Email: \{gmatthews, juliams22\}@vt.edu}}

\maketitle

\renewcommand\thefootnote{}
\footnotetext{
\IEEEauthorrefmark{1}~J. Shapiro is supported by the Department of War Cyber Service Academy Scholarship.\\
\IEEEauthorrefmark{2}~G. L. Matthews is partially supported by NSF DMS-2502705,  NSF CNS-2413218,  and the Commonwealth Cyber Initiative.

}

\begin{abstract} Folded Reed-Solomon codes, introduced by Guruswami and Rudra in 2007, have been shown to achieve the information-theoretically best possible trade-off between the rate of a code and the error-correction radius. In 2024, Bergamaschi, Golowich and Gunn extended this framework by constructing folded quantum Reed-Solomon codes (CSS codes obtained by folding) demonstrating that these codes tolerate errors up to the quantum Singleton bound. In this paper, we construct folded quantum Hermitian codes using the CSS framework and show that these codes are also list-decodable, tolerating errors up to the quantum Singleton bound. Compared to Reed–Solomon codes, Hermitian codes admit comparable lengths over smaller alphabets, enabling more efficient implementations. 
\end{abstract}

\section{Introduction}

Quantum error‑correcting codes derived from algebraic–geometric (AG) codes were first studied  by Ashikhmin, Litsyn, and Tsfasman~\cite{AshikhminLitsynTsfasman2001}, followed by numerous works developing explicit AG‑based  constructions and refinements~\cite{AshikhminKnill2001,MatsumotoUyematsu2007,EzermanKirov2018,Kim2022,KimMatthews2008,KimWalker2008,PereiraEtAl2021,HernandoMcGuireMonserratMoyano2020}. Hermitian curves have played a particularly prominent role in this literature, due to their rich automorphism groups and favorable asymptotic parameters~\cite{SarvepalliKlappenecker2009,LaGuardia2011,ChristensenGolik2020,CaoEtAl2020}.
In parallel, list decoding has been a central topic in coding theory since the work of Sudan and Guruswami–Sudan~\cite{SUDAN1997180,743426}, with subsequent advances improving upon the Johnson decoding radius~\cite{Johnson1962ANU,1530722}. A major breakthrough was the introduction of folded Reed–Solomon codes~\cite{10.1145/1132516.1132518,4418476}, in which consecutive code symbols are grouped over a larger alphabet, enabling efficient multivariate interpolation and fundamentally improving the rate–error‑radius trade‑off. This paradigm permits list decoding up to a $1-R-\epsilon$ fraction of errors, approaching the Singleton bound; folded Reed–Solomon codes can be viewed as a special case of Parvaresh–Vardy codes~\cite{1530722}. These ideas were later extended to folded algebraic–geometric codes, yielding constructions with polynomial list size over constant or bounded alphabets and decoding radii approaching capacity~\cite{10.5555/2634074.2634208,10.1145/3506668}.

Folded quantum codes arise by translating folding techniques into the quantum setting. Motivated by these classical successes and developments in quantum CSS and AG codes, researchers explored whether folding could similarly enhance quantum error correction, particularly in regimes beyond unique decoding. A first treatment appeared in work by Bergamaschi, Golowich, and Gunn~\cite{bergamaschi2024approaching}, who constructed folded quantum Reed–Solomon codes and showed that they achieve the quantum Singleton bound. In particular, they proved list decodability up to radius $\frac{1-R}{2}-\epsilon$, where $R$ is the rate of the folded quantum Reed–Solomon code as a CSS code. As in the classical case, however, these codes have large alphabet sizes; to address this, the authors introduced quantum distance amplification to reduce the alphabet size.
Roughly speaking, folded quantum codes encode logical qubits into CSS or stabilizer codes derived from folded classical codes, where multiple consecutive symbols are bundled into a higher‑dimensional qudit prior to quantization. Starting from a classical AG code, folding groups consecutive evaluations into a single symbol over an extension field, preserving rate while introducing stronger global algebraic structure. In the quantum setting, these folded codes are combined via the CSS construction so that $X$‑ and $Z$‑checks inherit this structure. Errors affecting many physical qubits thus appear as fewer block errors at the folded level, enabling quantum list‑decoding procedures to identify a small list of candidate logical states even beyond the unique‑decoding radius~\cite{LG}. Subsequent verification or purification then recovers the correct logical state, allowing recovery of the encoded quantum information. This mechanism enables folded quantum codes to approach the quantum Singleton bound while retaining efficient decoding and moderate alphabet size.
In this paper, we introduce quantum codes from folded algebraic‑geometry codes using positive‑genus curves, extending the Reed–Solomon construction based on the genus‑$0$ projective line. We give explicit constructions of folded quantum Hermitian codes and general folded quantum AG codes, and show that codes from positive‑genus curves can achieve smaller alphabet sizes than folded Reed–Solomon codes with distance amplification, while preserving comparable decoding guarantees.
The paper is organized as follows. Section~\ref{prelim} introduces notation and background. In Section~\ref{results}, we construct folded quantum Hermitian and general folded quantum AG codes, establish their list‑decoding capacity up to the quantum Singleton bound, and show that distance amplification is unnecessary. We also present an entanglement‑assisted construction of folded quantum AG codes. The paper concludes with a summary in Section~\ref{conclusion}.

\section{Preliminaries}\label{prelim}
In this section, we introduce the necessary notation for the remainder of this paper. See \cite{stichtenoth,10.1007/BFb0087995} for details. Throughout,  $\mathbb{F}_q$ denotes the finite field with $q$ elements, where $q = p^s,$ for a prime $p$. An \emph{$[n,k,d]$ code over $\mathbb{F}_q$} is a $k$-dimensional linear subspace
$
C \subseteq \F_q^n$
with minimum (Hamming) distance
$
d := \min\{ \mathrm{wt}(c) \mid c \in C,\ c \neq 0 \}$,
where $\mathrm{wt}(c)$ denotes the number of nonzero coordinates of $c$. Such a code is able to correct any $t:=\left\lfloor \frac{d-1}{2} \right\rfloor$ errors; that is, there is a unique codeword of $C$ contained in the Hamming ball of radius $t$ around a received word $w \in \F_q^n$ that differs from a codeword $c$ in at most $t$ positions. \rmv{When the number of errors exceeds $\left \lfloor \frac{d-1}{2} \right \rfloor$, there may be multiple codewords that are close to the received word.}
List decoding can address more than $t$ errors by relaxing   the uniqueness requirement: instead of returning a single codeword, the decoder outputs a list of all codewords within a specified decoding radius, allowing decoding beyond the unique decoding radius, at the cost of possibly producing multiple candidate codewords.

For a received word $y \in \mathbb{F}_q^n$ and a real number $\tau \in [0,1]$, we say that a codeword
$c \in \mathcal{C}$ is \emph{within relative distance $\tau$ of $y$} if
\[
\frac{1}{n} \, d_H(c,y) \le \tau,
\]
where $d_H(\cdot,\cdot)$ denotes Hamming distance.
Equivalently, $c$ differs from $y$ in at most $\tau n$ positions. A code $\mathcal{C} \subseteq \mathbb{F}_q^n$ is  $(\tau,L)$-list-decodable if for every
received word $y \in \mathbb{F}_q^n$, the number of codewords within relative distance $\tau$ of $y$ is at most $L$.
Formally,
\[
\left| \left\{ c \in \mathcal{C} \;:\; d_H(c,y) \le \tau n \right\} \right| \le L
\quad \text{for all } y \in \mathbb{F}_q^n.
\]
The parameter $\tau$ is the (relative) decoding radius, and $L$ is an upper bound on the output list size.
Smaller $L$ corresponds to stronger guarantees on ambiguity, while larger $\tau$ corresponds to the ability to correct a larger fraction of errors.
Unique decoding is the special case $L=1$ and $\tau < \frac{d}{2n}$. For $\tau \ge \frac{d}{n}$, it is possible for exponentially many codewords to lie within distance $\tau n$ of a received word, so bounding $L$ is essential.

Let $C \subseteq \mathbb{F}_q^n$ be an $[n,k,d]$ code and $m \mid n$.
Partition the coordinate positions $\{1,\dots,n\}$ into $\frac{n}{m}$ consecutive blocks
of size $m$. In particular, the
folded code has length $\frac{n}{m}$ and preserves the rate of the original
code, but $\frac{k}{m}$ need not be an integer unless $m\mid k$.
The folded code of $C$ with folding parameter $m$ is 
\[C^{(m)}:= \left\{ 
\bigl( c^{(1)}, c^{(2)},\ \dots, c^{(\frac{n}{m})} \bigr): c \in C
\right\} \subseteq (\mathbb{F}_q^{\,m})^{\frac{n}{m}}
\]
where $c^{(i)}:=(c_{(i-1)m+1}, \dots, c_{im}) \in \F_q^m$.
The folded code $C^{(m)}$ may be seen as a $\bigl[\frac{n}{m},\;\frac{k}{m},\;\ge\lceil \frac{d}{m} \rceil\bigr]$ code over the alphabet $\mathbb{F}_q^{\,m}$.
Equivalently, $C^{(m)}$ is obtained by viewing each block of $m$
symbols of $C$ as a single symbol over the alphabet $\mathbb{F}_q^{\,m}$. 
Notice that while the length of the $C^{(m)}$ is an integer (as $m\mid n$), the dimension is not required to be. 
As we shall see, when $C$ is an AG code, automorphisms may be used to determine the folding. Before moving to that, we note the dual of a folded code defined as follows. 
\begin{definition}\label{def1}    
The dual of the folded code $C^{(m)}$ is the set 
\[ (C^{(m)})^{\perp} :=
\left\{ 
{\bf{w}} \in (\F_q^m)^N: {\bf{w}} 
\cdot {\bf{c}} = 0 \ 
\forall {\bf{c}} \in C^{(m)} \right\}
\]
where ${\bf{w}}:=
\left(w^{(1)}, \ldots, w^{(\frac{n}{m})} \right)$ and ${\bf{c}}:=
\left(c^{(1)}, \ldots, c^{(\frac{n}{m})} \right)$.
Here, ${\bf{w}} 
\cdot {\bf{c}}:=c^{(1)}w^{(1)} + \dotsm  + c^{(\frac{n}{m})}w^{(\frac{n}{m})}$.
\end{definition}

As in \cite{bergamaschi2024approaching}, $(C^{(m)})^{\perp} = (C^{\perp})^{(m)}$.  Next, we review the relevant AG code concepts.
Let $X/\mathbb{F}_q$ be a smooth projective curve of genus $g$. Let $D=P_1+\cdots+P_n$
and let $G$ be a divisor on $X$ such that
$P_1,\dots,P_n$ are pairwise distinct $\mathbb F_q$-rational points of $X$, none of which appear in the support of $G$. The Riemann--Roch space of $G$ is defined as
\[
\mathcal{L}(G) = \{ f \in \F_q(X) : (f) + G \ge 0 \} \cup \{0\},
\]
i.e., the set of functions whose poles are bounded by $G$ and $\ell(G) = \dim_{\F_q}\mathcal{L}(G)$. The AG code is defined as
\[
C(D,G)
=
\{(f(P_1),\dots,f(P_n)) : f \in \mathcal{L}(G)\}.\] If $2g-2 < \deg(G) < n$, then $C(D,G)$ is an $[n,k,d]$ code with $k = \ell(G)$, and $d \ge n - \deg(G)$. If $G=mP$ for some positive integer $m$ and $P \in X(\F_q)$, then $C(D,G)$ is called a one-point code. The most common one-point code beyond Reed-Solomon codes are the Hermitian codes, which we will now describe. 
 Let $H/\F_{q^2}$ be the Hermitian curve defined by $y^q + y = x^{q+1}$. Then $H$ has $q^3+1$ $\F_{q^2}$-rational points, including a unique point at infinity, denoted $P_{\infty}$. The one-point Hermitian code $C(D,rP_{\infty})$ with parameters $[q^3, k, \geq q^3 -r]$ over $\F_{q^2}$ is defined as \[C(D, rP_{\infty}) = \{(f(P_1), \ldots, f(P_n)) : f\in \mathcal{L}(rP_{\infty})\} \subseteq \F_{q^2}^n.\]
The dual code of $C(D,rP_{\infty})$ is 
$C(D,rP_{\infty})^{\perp} = C(D, (q^3 - q^2 - q - 2 - r)P_{\infty}).$ Let $I(r) = \{x^iy^j : iq + j(q+1) \leq r\}$. For $0 < r < q^3$, the dimension of $C(D, rP_{\infty})$ is $|I(r)|$. The automorphism group of $H$ satisfies
\[
|\textup{Aut}(H/\F_{q^2})| = (q^3+1)q^3(q^2-1)
\]
providing a source of automorphisms for folding AG codes from the Hermitian curve.
\rmv{We consider automorphisms $\sigma \in \textup{Aut}(H/\F_{q^2})$ of the form
\[
\sigma(x) = \epsilon x + \delta,\qquad
\sigma(y) = \epsilon^{q+1}y + \epsilon\delta^q x + \mu,
\]
where $\epsilon \in \F_{q^2}^*$, $\delta,\mu \in \F_{q^2}$, and $\mu^q + \mu = \delta^{q+1}$. These automorphisms form a subgroup $A$ of $\textup{Aut}(H/\F_{q^2})$. }

The following definition is a special case of that presented in \cite{10.5555/2634074.2634208}.

\begin{definition} Let $n=q^3$ and $\sigma \in \textup{Aut}(H/\F_{q^2})$ be an automorphism of the Hermitian curve such that
$D:=\sum_{i=1}^{\frac{n}{m}} \sum_{j=1}^m P_i^{\sigma^{j}}$ is the sum of the $q^3$ pairwise distinct $\F_{q^2}$-rational points of $H$ other than $P_{\infty}$. Then $C(D, rP_{\infty})^{(\sigma,m)}=$
\[
\left\{ \left(f \left(P_1^{(\sigma,m)} \right), \ldots, f \left(P_N^{ (\sigma,m)} \right) \right)
: f \in \mathcal L(rP_{\infty}) \right\} \rmv{\subseteq (\F_{q^2}^m)^N}\] is associated $m$-folded one-point Hermitian code where
\[ f(P^{(\sigma,m)}):=\begin{bmatrix}
    f(P_1)\\f(P_1^{\sigma})\\
    \vdots \\
    f(P_1^{\sigma^{m-1}})
\end{bmatrix} \in \F_{q^2}^m.\]
\end{definition}

The code $C(D,rP_{\infty})^{(\sigma,m)}$ has parameters $[\frac{n}{m}, \frac{\ell(rP_{\infty})}{m}\geq \frac{n-r}{m}]$. The folded one-point Hermitian codes are defined taking an automorphism from the group above. In order to fold all $q^3$ rational points into equal sized orbits, we require that $m\mid \textup{ord}(\sigma)$ and $m\mid q^3$. Therefore, we will use the automorphisms
\[
\sigma_{\delta,\mu} : (x,y) \mapsto (x+\delta,\, y+\delta^q x + \mu)
\]
where $ \delta, \mu \in \F_{q^2}$ such that $\mu^q + \mu = \delta^{q+1}$, to fold the one-point Hermitian codes for the remainder of the paper.

\section{Folded quantum Hermitian code constructions}\label{results}

In this section, we construct folded quantum Hermitian codes. We begin with the definition of the CSS construction for quantum codes (see \cite{1996PhRvA..54.1098C,PhysRevA.54.4741}). For $x=(x_1,\ldots,x_n)\in \F_q^n$, we write $\ket{x}$ for the computational basis state
\[
\ket{x}=\ket{x_1}\otimes \cdots \otimes \ket{x_n}
\in (\mathbb{C}^q)^{\otimes n}.
\]

Let $C_1, C_2$ be $[n, k_1, d_1]$ and $[n,k_2, d_2]$ codes with $C_2^{\perp} \subseteq C_1$. The associated CSS code is an $[[n,k_1+k_2 - n,\min\{\textup{wt}(C_1\setminus C_2^{\perp}), \textup{wt}(C_2 \setminus C_1^{\perp})\}]]$ code given by
\[\textup{CSS}(C_1, C_2) = \left\{\frac{1}{\sqrt{|C_2^{\perp}|}}\sum_{y \in C_2^{\perp}}\ket{x + y}: x \in C_1 \right\}\]
where $x+y$ is taken coordinatewise and the rate of the CSS code is $R_{\textup{CSS}} = R_{C_1} + R_{C_2} -1.$ \rmv{Let $d^{\star} = \min\{d_1, d_2\}.$ Then, $d^{\star} \leq d$.} Recall that in the classical setting, a code $C \subseteq \Sigma^n$ is said to be $(\tau, L)$ list-decodable if there are at most $L$ codewords of $C$ in any Hamming ball that has radius $\tau \cdot n \in \Sigma^n$ (see \cite{elias1957list,wozencraft1958list}). For AG codes, $\tau$ and $L$ depend on the family of AG codes we are working with. Recall that stabilizer codes are subspaces of $n$-qudit states defined as the joint $+1$ eigenspace of an abelian subgroup of the Pauli group, known as the stabilizer group (see \cite{1997PhDT.......232G}). Error detection is performed by measuring a set of stabilizer generators, and the resulting list of eigenvalues is called the syndrome. This syndrome identifies the presence and type of error up to degeneracy and serves as the quantum analogue of classical parity-check syndromes. The notion of list decoding in the quantum setting can be defined as follows.  

\begin{definition} A stabilizer code $Q$ is said to be $(\tau,L)$ \emph{quantum list-decodable} if, for every possible syndrome $s$, there are at most $L$ logically distinct Pauli operators of relative weight at most $\tau$ that are consistent with the syndrome~$s$.
\end{definition}

 In \cite{bergamaschi2024approaching}, the authors show that after measuring the syndrome, list decoding the stabilizer code is the entirely classical task of computing the list of possible corrections corresponding to the given syndrome.

\begin{theorem} \cite[Lemma 1.5]{bergamaschi2024approaching} Let $C_1, C_2 \subseteq \F_{q}^n$ be two $\F_q$ linear codes that are $(\tau, L)$ list-decodable and let $C_2^{\perp} \subseteq C_1$. Then, there exists a quantum code $Q = \textup{CSS}(C_1, C_2)$ that is $(\tau, L^2)$ quantum list-decodable. If $C_1, C_2$ are efficiently list-decodable, that is, one can find a polynomial time algorithm that outputs the list of codewords, then $Q$ is as well. 
\end{theorem}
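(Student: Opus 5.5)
The plan is to reduce quantum list decoding of $Q=\textup{CSS}(C_1,C_2)$ to two independent classical list-decoding problems, one for the $X$-part and one for the $Z$-part of the error. Any Pauli error on $n$ qudits factors, up to phase, as $X(e_X)Z(e_Z)$ with $e_X,e_Z\in\F_q^n$. The support of the Pauli is $\textup{supp}(e_X)\cup\textup{supp}(e_Z)$. Hence relative weight at most $\tau$ forces both $\textup{wt}(e_X)\le\tau n$ and $\textup{wt}(e_Z)\le\tau n$.

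First I will fix the stabilizer structure. The $Z$-type stabilizers are $Z(v)$ for $v\in C_1^{\perp}$, and the $X$-type stabilizers are $X(u)$ for $u\in C_2^{\perp}$. Measuring the $Z$-type generators on $X(e_X)Z(e_Z)\ket{\psi}$ returns the classical syndrome $H_1 e_X$, where $H_1$ is a parity-check matrix of $C_1$. In the Fourier/Hadamard basis, the $X$-type generators return $H_2 e_Z$, with $H_2$ a parity-check matrix of $C_2$. So a syndrome $s$ is a pair $(s_1,s_2)$, and a Pauli is consistent with $s$ exactly when $H_1e_X=s_1$ and $H_2e_Z=s_2$.

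Next, fix one solution $w_1$ of $H_1w_1=s_1$, obtained by linear algebra in polynomial time. The consistent $X$-parts of weight at most $\tau n$ are precisely the vectors $e_X=w_1-c$ with $c\in C_1$ and $d_H(c,w_1)\le\tau n$. By the $(\tau,L)$ list-decodability of $C_1$, at most $L$ codewords $c$ qualify, so there are at most $L$ candidate $e_X$. The same argument with $C_2$, $s_2$ and a fixed preimage $w_2$ gives at most $L$ candidate $e_Z$. Every consistent Pauli of relative weight at most $\tau$ therefore lies in a set of at most $L\cdot L=L^2$ operators $X(e_X)Z(e_Z)$. Distinct logical classes are a quotient of this set, since operators differing by a stabilizer, i.e.\ by elements of $C_2^\perp$ or $C_1^\perp$, are logically equivalent. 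This gives at most $L^2$ logically distinct operators, which is $(\tau,L^2)$ quantum list-decodability. For the efficiency claim, run the classical list decoders of $C_1$ and $C_2$ on $w_1$ and $w_2$, then output all pairwise products. This takes polynomial time and produces at most $L^2$ items, which can optionally be deduplicated modulo the stabilizer group.

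The main obstacle is bookkeeping rather than depth. I must get the correct correspondence between which code's parity checks detect which error type (here $C_1$ for $X$-errors and $C_2$ for $Z$-errors, given $C_2^\perp\subseteq C_1$), and I must handle the qudit Fourier transform sign conventions for $Z$-syndromes over $\F_q$ via the trace pairing. I would fix conventions so that $Z(b)X(a)=\omega^{\textup{tr}(a\cdot b)}X(a)Z(b)$ and check commutation against both stabilizer families directly. This avoids the Hadamard picture and makes the syndrome identities $H_1e_X$ and $H_2e_Z$ immediate.
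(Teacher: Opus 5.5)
Your argument is correct and matches the reduction the paper relies on. The paper does not reprove this cited lemma, but in the proof of the subsequent proposition it describes the same decoder: measure the syndromes, list-decode the $X$-part against $C_1$ and the $Z$-part against $C_2$ separately to obtain lists $L_X$ and $L_Z$, and pair them, giving at most $|L_X||L_Z|\le L^2$ candidates.

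One detail worth making explicit, which you already flag with the trace pairing: the exponent vectors of the measured $Z$- and $X$-type generators must span $C_1^\perp$ and $C_2^\perp$ over $\F_p$, not merely over $\F_q$. Otherwise, when $q$ is not prime, the trace-valued eigenvalues do not determine all of $H_1e_X$ and $H_2e_Z$.
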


In \cite[Lemma 1.6]{bergamaschi2024approaching}, the authors show that if $Q = \textup{CSS}(C_1, C_2)$ where $C_1,C_2 \subseteq \F_{q}^n$, then $Q^{(m)} \subseteq (\F_{q}^m)^{\frac{n}{m}}$ is the CSS code of the two $\F_q$-linear folded classical codes $C_1^{(m)}$ and $C_2^{(m)}$. The next lemma  formalizes an implicit assumption by characterizing exactly when folding commutes with the CSS construction.

\begin{lemma} Let $C_1,C_2$ be linear codes and $C_1^{(m)}$, $C_2^{(m)}$ be their folded versions. We have that $\textup{CSS}(C_1^{(m)}, C_2^{(m)}) = \textup{CSS}(C_1,C_2)^{(m)}$ if and only if $(C_2^{(m)})^{\perp} = (C_2^{\perp})^{(m)}$.
\end{lemma}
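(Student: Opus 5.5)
The plan is to unwind both sides of the claimed equality into sets of kets and compare them term by term. First I will fix what $\textup{CSS}(C_1,C_2)^{(m)}$ means. Folding is a relabeling of the $n$ qudits of dimension $q$ into $\frac{n}{m}$ qudits of dimension $q^m$, via the identification $(\mathbb{C}^q)^{\otimes m}\cong \mathbb{C}^{q^m}$. Under this identification the basis state $\ket{x}$, $x\in\F_q^n$, becomes the state $\ket{x^{(1)}}\otimes\cdots\otimes\ket{x^{(n/m)}}$, which I write as $\ket{\mathbf{x}}$ with $\mathbf{x}\in(\F_q^m)^{n/m}$. The map $x\mapsto \mathbf{x}$ is an $\F_q$-linear bijection $\phi:\F_q^n\to(\F_q^m)^{n/m}$, and $C^{(m)}=\phi(C)$ for every code $C$. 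So folding the CSS state gives
\[
\textup{CSS}(C_1,C_2)^{(m)}=\left\{\frac{1}{\sqrt{|C_2^{\perp}|}}\sum_{y\in C_2^{\perp}}\ket{\phi(x)+\phi(y)} : x\in C_1\right\}
=\left\{\frac{1}{\sqrt{|(C_2^{\perp})^{(m)}|}}\sum_{\mathbf{y}\in (C_2^{\perp})^{(m)}}\ket{\mathbf{x}+\mathbf{y}} : \mathbf{x}\in C_1^{(m)}\right\}.
\]
This uses linearity of $\phi$ and $|\phi(S)|=|S|$. By definition,
\[
\textup{CSS}(C_1^{(m)},C_2^{(m)})=\left\{\frac{1}{\sqrt{|(C_2^{(m)})^{\perp}|}}\sum_{\mathbf{y}\in (C_2^{(m)})^{\perp}}\ket{\mathbf{x}+\mathbf{y}} : \mathbf{x}\in C_1^{(m)}\right\},
\]
with the dual taken in the sense of Definition~\ref{def1}. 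This construction is only well defined if $(C_2^{(m)})^{\perp}\subseteq C_1^{(m)}$, so I will note that this hypothesis is needed in the ``if'' direction.

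Both sides now have the form $\{\ket{\mathbf{x}+A}:\mathbf{x}\in C_1^{(m)}\}$, where $\ket{\mathbf{x}+A}$ denotes the normalized uniform superposition over the coset $\mathbf{x}+A$. On the left $A=(C_2^{\perp})^{(m)}$, and on the right $A=B:=(C_2^{(m)})^{\perp}$. The ``if'' direction is then immediate: if $A=B$ the two sets of states coincide. For ``only if'', I will compare the states for $\mathbf{x}=0$. The left side contains $\ket{A}$, the uniform superposition over $A$, and the right side contains $\ket{B}$. Equality of the codes gives $\ket{A}=\ket{\mathbf{x}+B}$ for some $\mathbf{x}$. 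Uniform coset states have support exactly their coset, so $A=\mathbf{x}+B$. Since $0\in A$, this forces $\mathbf{x}\in B$, and hence $A=B$.

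The main subtlety is interpreting ``$=$'' in the statement. If it means equality of the code spaces, i.e. the spans, rather than of the listed sets of states, the support argument must be adapted. The span of coset states $\{\ket{\mathbf{x}+A}\}$ is exactly the space of vectors supported on $C_1^{(m)}$ that are constant on $A$-cosets. Equality of the spans for subgroups $A,B\subseteq C_1^{(m)}$ then forces the $A$-coset partition and the $B$-coset partition of $C_1^{(m)}$ to coincide, hence $A=B$. I will include this refinement so the argument covers either reading.

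Finally, I will remark that the condition always holds when the folded dual is the blockwise dot product of Definition~\ref{def1}. Indeed $\mathbf{w}\cdot\mathbf{c}=w\cdot c$ under $\phi$, which recovers the statement $(C^{(m)})^{\perp}=(C^{\perp})^{(m)}$ quoted from \cite{bergamaschi2024approaching}. The lemma thus isolates exactly this compatibility as the assumption being made.
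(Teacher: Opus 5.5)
Your proposal is correct and follows the same route as the paper. Both identify $\textup{CSS}(C_1,C_2)^{(m)}$ and $\textup{CSS}(C_1^{(m)},C_2^{(m)})$ as families of uniform superpositions over cosets, inside $C_1^{(m)}$, of $(C_2^{\perp})^{(m)}$ and $(C_2^{(m)})^{\perp}$ respectively, and conclude that the codes agree exactly when these two subgroups agree. Your support argument (the $\mathbf{x}=0$ coset state, plus the coset-partition argument for the span reading) makes rigorous the step ``the codes coincide iff the sets of cosets agree,'' which the paper asserts without proof. Your closing remark that Definition~\ref{def1} makes the condition automatic matches the paper's own citation of $(C^{(m)})^{\perp}=(C^{\perp})^{(m)}$.
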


\begin{proof}
Folding acts linearly on basis states by mapping each vector
$v\in\mathbb{F}_q^n$ to $v^{(m)}\in\mathbb{F}_{q^m}^{\frac{n}{m}}$.
Thus, $\textup{CSS}(C_1,C_2)^{(m)}$ consists of uniform superpositions over cosets
of $(C_2^\perp)^{(m)}$ inside $C_1^{(m)}$, whereas
$\textup{CSS}(C_1^{(m)},C_2^{(m)})$ consists of uniform superpositions over cosets
of $(C_2^{(m)})^\perp$ inside $C_1^{(m)}$.
These codes coincide if and only if the sets of cosets agree, which holds
precisely when
\(
(C_2^{(m)})^\perp = (C_2^\perp)^{(m)}.
\)
\end{proof}

 We now provide the following construction of folded quantum Hermitian codes.

\begin{construction}\label{foldedH}
Let $H/\mathbb{F}_{q^2}$ be the Hermitian curve, and let
$
C_2 \subseteq C_1 \subseteq \mathbb{F}_{q^2}^n
$
be one-point Hermitian codes of the form
$
C_i := C(D, r_i P_\infty), i=1,2,
$
where $D = \sum_{j=1}^n P_j$ is the sum of all finite rational points and $
0 \le r_2 \le r_1 < n$. Let $\sigma \in \textup{Aut}(H/\F_{q^2})$ be an automorphism of order $m$, which partitions the
evaluation points $\{P_1,\ldots,P_n\}$ into disjoint orbits of size $m$.
\rmv{Let
$
C_i^{\sigma,m} \subseteq (\mathbb{F}_q^m)^{\frac{n}{m}}
$
denote the folded code obtained by grouping codeword coordinates along the
$\sigma$-orbits.} Assume that
$
C_2^{(\sigma,m)} \subseteq C_1^{(\sigma,m)}
$ and
${C_2^{(\sigma,m)}}^{\perp} \subseteq C_1^{(\sigma,m)}$.
Then the folded Hermitian CSS quantum (FQHC) code is defined as $
Q^{(\sigma,m)} :=
\textup{CSS}\bigl(C_1^{(\sigma,m)}, C_2^{(\sigma,m)}\bigr)$.
\end{construction}
We see that the FQHC $Q^{(\sigma,m)} = \textup{CSS}(C_1^{(\sigma,m)}, C_2^{(\sigma,m)})=$ 
\begin{align*}
\Bigg\{
\frac{1}{\sqrt{q^s}}
\sum_{g \in \mathcal{L}(\alpha P_{\infty})}
\bigotimes_{i=1}^{N}
\ket{
(f + g)(P_i^{(\sigma,m)})
}
:\;
f \in \mathcal{L}(r_1 P_{\infty})
\Bigg\}
\end{align*}
where $s:={\frac{\ell(\alpha P_{\infty})}{m}}$.

\rmv{
\begin{construction}\label{foldedH} Let $H = \F_{q^2}(x,y)$, $\sigma \in \textup{Aut}(H/\F_{q^2})$ and let $0 < r_1, r_2 < q^3$. Let $C_i = C(D, r_iP_{\infty})$, $i =1,2$ with $C_2^{\perp} \subseteq C_1$. Let $C_1^{(\sigma,m)}, C_2^{(\sigma, m)}$ be their folded versions with $(C_2^{(\sigma,m)})^{\perp} \subseteq C_1^{(\sigma,m)}$ and $N:=\frac{q^3}{m}$. Let $\alpha = q^3 +q^2 - q - 2 - r_2$. Let $D$ a divisor with $D^{\sigma} = D$. Then the resulting FQHC $Q^{(\sigma,m)} = \textup{CSS}(C_1^{(\sigma,m)}, C_2^{(\sigma,m)})$ is 

\begin{align*}
\Bigg\{
\frac{1}{\sqrt{q^s}}
\sum_{g \in \mathcal{L}(\alpha P_{\infty})}
\bigotimes_{i=1}^{N}
\ket{
(f + g)(P_i^{(\sigma,m)})
}
:\;
f \in \mathcal{L}(r_1 P_{\infty})
\Bigg\}
\end{align*}
where $s:={\frac{\ell(\alpha P_{\infty})}{m}}$.
    
\end{construction}}

\begin{lemma} The FQHC $Q^{(\sigma,m)}$ has parameters \[ \left[\left[\frac{q^3}{m}, \frac{(\ell(r_1P_{\infty}) - \ell(\alpha P_{\infty}))}{m},\geq \left \lceil \frac{\min\{d(C_1), d(C_2)\}}{m}\right \rceil \right] \right]\] where $\alpha = q^3 +q^2 - q - 2 - r_2.$
\end{lemma}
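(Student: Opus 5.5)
The plan is to read off the three parameters of $Q^{(\sigma,m)}=\textup{CSS}(C_1^{(\sigma,m)},C_2^{(\sigma,m)})$ from the general CSS parameter formula $[[n,k_1+k_2-n,\ \min\{\textup{wt}(C_1\setminus C_2^{\perp}),\textup{wt}(C_2\setminus C_1^{\perp})\}]]$. We apply it to the folded codes over the alphabet $\F_{q^2}^m$ and then translate each quantity back to the unfolded Hermitian codes.

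\textbf{Length.} The length is the number of $\sigma$-orbits. Since $\sigma$ partitions the $n=q^3$ finite rational points into orbits of size $m$, this number is $N=q^3/m$.

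\textbf{Dimension.} Dimensions over $\F_{q^2}^m$ are measured as $\F_{q^2}$-dimension divided by $m$, as in the discussion of folded codes in Section~\ref{prelim}. Folding only regroups coordinates, so $\dim C_i^{(\sigma,m)}=\ell(r_iP_\infty)/m$. The CSS dimension is therefore $\frac{1}{m}\bigl(\ell(r_1P_\infty)+\ell(r_2P_\infty)-q^3\bigr)$. To rewrite this, use the identity $(C^{(m)})^\perp=(C^\perp)^{(m)}$ together with the Hermitian duality $C(D,r_2P_\infty)^\perp=C(D,\alpha P_\infty)$. These give $(C_2^{(\sigma,m)})^\perp=C(D,\alpha P_\infty)^{(\sigma,m)}$, which has dimension $(q^3-\ell(r_2P_\infty))/m$. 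Since $(C_2^{(\sigma,m)})^\perp\subseteq C_1^{(\sigma,m)}$ by the hypothesis of Construction~\ref{foldedH}, the logical dimension is $\dim C_1^{(\sigma,m)}-\dim (C_2^{(\sigma,m)})^\perp$. Substituting $q^3-\ell(r_2P_\infty)=\ell(\alpha P_\infty)$ gives $(\ell(r_1P_\infty)-\ell(\alpha P_\infty))/m$. This matches the normalization $s=\ell(\alpha P_\infty)/m$ in the explicit description of the code, where the superposition runs over $g\in\mathcal L(\alpha P_\infty)$.

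\textbf{Distance.} Each CSS weight term is bounded below by a minimum distance. Every element of $C_1^{(\sigma,m)}\setminus (C_2^{(\sigma,m)})^\perp$ is a nonzero folded codeword of $C_1$, so its weight is at least $d(C_1^{(\sigma,m)})$. Symmetrically, the other term is at least $d(C_2^{(\sigma,m)})$. A folded symbol is nonzero whenever any of its $m$ underlying coordinates is nonzero. A nonzero codeword of $C_i$ has at least $d(C_i)$ nonzero coordinates spread over blocks of size $m$, so it touches at least $\lceil d(C_i)/m\rceil$ blocks; this is the bound $d(C^{(m)})\ge\lceil d/m\rceil$ noted in Section~\ref{prelim}. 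Because folding along $\sigma$-orbits is a fixed coordinate permutation followed by grouping, the same bound holds for $C_i^{(\sigma,m)}$. Taking the minimum over $i=1,2$ gives $\ge\lceil \min\{d(C_1),d(C_2)\}/m\rceil$.

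\textbf{Main obstacle.} The delicate step is the dimension count. It requires the dual of the folded code $C_2^{(\sigma,m)}$ to be precisely the folded dual Hermitian code $C(D,\alpha P_\infty)^{(\sigma,m)}$, with $\alpha$ given by the Hermitian duality formula, so that the preceding lemma identifies the quantum code correctly. I would verify this using the coordinatewise inner product of Definition~\ref{def1}, which is unchanged by permuting coordinates along the orbits. Applying $(C^{(m)})^\perp=(C^\perp)^{(m)}$ then gives the identification.
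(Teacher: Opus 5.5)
The paper states this lemma without proof, treating it as immediate from the CSS parameter formula. Reading off the three parameters, as you do, is the natural argument. Your length count is correct. So is the distance bound: a nonzero codeword of $C_i$ has at least $d(C_i)$ nonzero coordinates, which occupy at least $\lceil d(C_i)/m\rceil$ blocks of size $m$. The identification $(C_2^{(\sigma,m)})^\perp=C(D,\alpha P_\infty)^{(\sigma,m)}$ is also fine. You correctly use $\alpha=q^3+q^2-q-2-r_2=n+2g-2-r_2$ rather than the $q^3-q^2-\cdots$ misprint in the preliminaries.

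\textbf{The gap.} You substitute $q^3-\ell(r_2P_\infty)=\ell(\alpha P_\infty)$ without justification, and this is false in general.

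\emph{Why the identity needs a hypothesis.} Take $2g-2=q^2-q-2$ and canonical divisor $(q^2-q-2)P_\infty$. Since $r_2<q^3$, we have $\alpha>2g-2$. Riemann--Roch then gives $\ell(\alpha P_\infty)=\alpha+1-g$ and $\ell(r_2P_\infty)=r_2+1-g+\ell((q^2-q-2-r_2)P_\infty)$. Hence $q^3-\ell(r_2P_\infty)=\ell(\alpha P_\infty)-\ell((q^2-q-2-r_2)P_\infty)$.

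Equivalently, $\dim C(D,\alpha P_\infty)=\ell(\alpha P_\infty)-\ell(\alpha P_\infty-D)$. Because $D-q^3P_\infty$ is the divisor of $x^{q^2}-x$, the kernel satisfies $\mathcal{L}(\alpha P_\infty-D)\cong\mathcal{L}((\alpha-q^3)P_\infty)$. This is nonzero exactly when $\alpha\ge q^3$. So your substitution holds if and only if $r_2>q^2-q-2$.

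\emph{A counterexample.} Take $q=2$, $r_1=7$, $r_2=0$, $\sigma=\sigma_{0,1}\colon(x,y)\mapsto(x,y+1)$ and $m=2$. Then $\alpha=8$, and $C_2^\perp=C(D,8P_\infty)=C(D,7P_\infty)=C_1$, since both codes have dimension $7$. All hypotheses of the construction hold. Yet the stated formula gives $(\ell(7P_\infty)-\ell(8P_\infty))/2=-1/2$, while the true dimension is $7/2-7/2=0$.

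\textbf{How to fix it.} Either add the hypothesis $r_2>q^2-q-2$ (equivalently $\alpha<q^3$) and justify the substitution by Riemann--Roch as above; all of the paper's examples satisfy this. Or state the dimension as $(\ell(r_1P_\infty)+\ell(r_2P_\infty)-q^3)/m$. The same caveat applies to your remark that the normalization $s=\ell(\alpha P_\infty)/m$ matches: that also needs the evaluation map on $\mathcal{L}(\alpha P_\infty)$ to be injective.
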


\begin{example} Let $q=2,$ $\epsilon = 1$, $\delta = 0$, $\F_{2^2} = \{0,1,\alpha, \alpha + 1\}$, and $H: x^3 = y^2 + y$, $\mu \in \mathbb{F}_{q^2}$ with $\mu^2 + \mu = \delta^3$. We consider the automorphism $\sigma_{0,\mu}$ defined in the previous section. Let $C_1 = C(D, 4P_{\infty}), C_2 = C(D, 6P_{\infty})$. In this case, $m=2$. Notice that $C_2^{\perp} = C(D, 2P_{\infty}) \subseteq C_1$. 
The $[[4, 1,\geq 2]]$ FQHC has codewords
\begin{align*}
\Bigg\{
\frac{1}{\sqrt{2}}
\sum_{g \in \langle 1,x \rangle }
\bigotimes_{i=1}^{4}
\ket{
\begin{bmatrix}
(f + g)(P_i^{(\sigma,2)})
\end{bmatrix}
}
:\;
f \in \left<1, x, y, x^2 \right>\;
\Bigg\}.
\end{align*}

\end{example}

The choice of automorphism used to fold the one-point Hermitian codes changes the alphabet size and parameters of the classical folded one-point Hermitian codes and the CSS code associated to them. We now have the following example.

\begin{example}
For \(q=4\), set $D = P_1 + \dots + P_{64}$. Let $C_1 = C_2 = C(D, 48P_{\infty})$ both codes of length $64$. Then 
$C_2^{\perp} = C(D, 26P_{\infty}) \subseteq C_1$. For $q=5$, set $D=P_1+\ldots + P_{125}$ and let $C_1 = C_2 = C(D, 95P_{\infty})$ both codes of length $125$, then  
$C_2^{\perp} = C(D, 48P_{\infty}) \subseteq C_1$. For $q=7$, set $D=P_1+\dots + P_{343}$. Let $C_1=C_2 = C(D,245P_\infty)$, then $C_2^{\perp}=C(D,138 P_\infty) \subseteq C_1.$ For $q=8$, set $D = P_1 + \dots + P_{512}$. Let $C_1=C_2 = C(D, 416P_{\infty})$ both codes of length $512$, then $C_2^{\perp} = C(D, 150P_{\infty}) \subseteq C_1$. For $q=16$, set $D= P_1 + \dots + P_{4096}$. Let $C_1=C_2= C(D,3584P_\infty)$, then $C_2^{\perp} = C(D,750P_\infty)\subseteq C_1.$ Consider the automorphisms $\sigma_{0,\mu}$ and $\sigma_{\delta,\mu}$, where
$\mu\neq 0$ for $\sigma_{0,\mu}$ and $\delta\neq 0$ for $\sigma_{\delta,\mu}$. Let $Q^{(\sigma,m)} = \textup{CSS}(C_1^{(\sigma, m)}, C_2^{(\sigma,m)})$. The resulting parameters are shown in Table \ref{tab:fqhc_examples}. 

\begin{table}[h!]
\centering
\renewcommand{\arraystretch}{1.5}
\begin{tabular}{|c|c|c|c|}
\hline
\textbf{Field} & \textbf{$m$} & \textbf{$C_i^{(\sigma,m)}$} & \textbf{$Q^{(\sigma,m)}$} \\ 
\hline
$q=4$ & $2$
& $[32,\tfrac{43}{2},\geq 8]$ 
& $[[32,11,\geq 8]]$ \\ 
\hline
$q=4$ & $4$ 
& $[16,\tfrac{43}{4},\geq 4]$ 
& $[[16,\tfrac{11}{2},\geq 4]]$ \\
\hline
$q=5$ & $5$ 
& $[25,\tfrac{86}{5},\geq 6]$ 
& $[[25,\tfrac{47}{5},\geq 6]]$ \\
\hline
$q=7$ & $7$ 
& $[49, \tfrac{225}{7}, \geq 14]$ & $[[49, \tfrac{107}{7}, \geq 14]]$ \\ 
\hline
$q=8$ & $2$ & $[256, \tfrac{389}{2}, \geq 48]$
& $[[256, 133, \geq 48]]$ \\ 
\hline
$q=8$  & $4$ 
& $[128, \tfrac{389}{4}, \geq 24]$& $[[128, \tfrac{266}{4}, \geq 24]]$ \\ 
\hline
$q=16$ & $2$ & $[2048, \tfrac{3465}{2}, \geq 256]$ & $[[2048, 1417, \geq 256]]$\\
\hline
$q=16$  & $4$ & $[1024, \tfrac{3465}{4}, \geq 128]$ & $[[1024, \tfrac{2834}{4}, \geq 128]]$ \\
\hline
\end{tabular}
\caption{Examples of folded Hermitian codes with respect to $\sigma_{\delta, \mu}$ and the corresponding folded quantum Hermitian codes.}
\label{tab:fqhc_examples}
\end{table}
\end{example}

In comparison, for folded Reed--Solomon codes over $\F_{q^2}$, for $q=4$ we consider
the Reed--Solomon code with parameters $[15,10,6]$. Let $\gamma \in \F_{q^2}^*$ be a primitive element,
and consider the automorphism $\sigma(x) = \gamma x$, which has order $15$
and acts transitively on $\F_{q^2}^*$. Folding is performed by grouping consecutive evaluations along the orbit $\{1, \gamma, \gamma^2, \dots, \gamma^{14}\}.$ Thus, the folding parameter $m$ must divide $15$, so the natural choices
are $m = 3$ or $m = 5$. For example, if $m=3$, we get a code with parameters $[5,10/3,\geq 2]$ and the resulting CSS code has parameters $[[5, 5/3, \geq 2]]$, which has smaller length, dimension and distance for the CSS code.

Unlike the Hermitian case, folded Reed--Solomon codes use a single cyclic orbit of the evaluation set. For Hermitian one-point codes, automorphisms fixing $P_\infty$ decompose the affine rational points into multiple structured orbits, so folding must respect this orbit structure rather than a global cyclic ordering. Thus, Reed--Solomon codes admit simple cyclic foldings, while Hermitian codes exploit richer algebraic and geometric structure and longer length.

The folded Hermitian codes are $(1-R-\epsilon,N^{\mathcal{O}(1/\epsilon^2)})$-list-decodable as shown in \cite[Theorem 3.3]{10.5555/2634074.2634208}. We now discuss the list decodability of the folded quantum Hermitian code constructed in Construction \ref{foldedH}.

\begin{theorem}\label{fqhcl}
Let $l = 1/\epsilon^2$ be a square prime power, $q = l^2 = 1/\epsilon^4$, $m = l$ and $\epsilon > 0$. 
The folded quantum Hermitian code (FQHC) is 
$\left(\frac{1 - R_{FQHC}}{2} - \epsilon,\; N^{\mathcal{O}(1/\epsilon^2)} \right)$-list-decodable,
where $R_{FQHC} = R_{C_1^{(\sigma,m)}} + R_{C_2^{(\sigma,m)}} - 1.$
\end{theorem}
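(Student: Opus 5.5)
The plan is to reduce the quantum statement to the classical list-decodability of the two folded Hermitian codes, and then combine the two classical guarantees via \cite[Lemma 1.5]{bergamaschi2024approaching}. Write $R_i := R_{C_i^{(\sigma,m)}}$, so that $R_{FQHC} = R_1 + R_2 - 1$.

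First, I invoke the classical result \cite[Theorem 3.3]{10.5555/2634074.2634208}. With the stated choices $l = 1/\epsilon^2$, $q = l^2$ and $m = l$, it gives that each $C_i^{(\sigma,m)}$ is $(1 - R_i - \epsilon, N^{\mathcal{O}(1/\epsilon^2)})$-list-decodable. The parameter choices are exactly what makes the genus-dependent loss $g/n \approx 1/q^{1/2}$ and the folding loss of order $1/m$ both at most $O(\epsilon)$. If needed, I would rescale $\epsilon$ by a constant, which is absorbed into the $\mathcal{O}(\cdot)$ in the exponent of the list size.

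Second, I apply \cite[Lemma 1.5]{bergamaschi2024approaching} with a common radius $\tau := \min\{1 - R_1, 1 - R_2\} - \epsilon$. Both codes are then $(\tau, L)$-list-decodable with $L = N^{\mathcal{O}(1/\epsilon^2)}$. The hypothesis $(C_2^{(\sigma,m)})^\perp \subseteq C_1^{(\sigma,m)}$ is assumed in Construction~\ref{foldedH}, and the folding/CSS compatibility lemma above ensures that $Q^{(\sigma,m)}$ is genuinely the CSS code of the folded pair. The conclusion is that $Q^{(\sigma,m)}$ is $(\tau, L^2)$ quantum list-decodable, and $L^2$ is still $N^{\mathcal{O}(1/\epsilon^2)}$.

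Third, I compare $\tau$ with $\frac{1 - R_{FQHC}}{2} - \epsilon = \frac{(1 - R_1) + (1 - R_2)}{2} - \epsilon$. The minimum of two numbers is at most their average, so
\[
\min\{1 - R_1,\, 1 - R_2\} \le \frac{(1 - R_1) + (1 - R_2)}{2},
\]
with equality exactly when $R_1 = R_2$. This inequality runs the wrong way for a general pair, and it is the main obstacle. I would handle it by restricting to the balanced regime $R_1 = R_2$, or more generally $|R_1 - R_2| = O(\epsilon)$. This regime is the one used in \cite{bergamaschi2024approaching}, and it matches the examples above where $C_1 = C_2$. Under this restriction $\tau \ge \frac{1 - R_{FQHC}}{2} - O(\epsilon)$, and rescaling $\epsilon$ gives the claimed radius. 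Monotonicity of list-decodability in $\tau$ then finishes the argument.

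Failing the balanced assumption, I would choose $r_1, r_2$ symmetrically so that $\ell(r_1 P_\infty) = \ell(r_2 P_\infty)$ up to rounding. This is always possible, since $\ell(r P_\infty)$ increases by at most one at each step once $r > 2g - 2$. Finally, for efficiency, the polynomial-time classical decoder for folded AG codes transfers to the quantum setting by the same lemma.
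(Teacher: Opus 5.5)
Your proposal is correct and follows the same route as the paper. You cite the classical $(1-R-\epsilon, N^{\mathcal{O}(1/\epsilon^2)})$ list-decodability of folded Hermitian codes, combine the two codes via Lemma~1.5 of Bergamaschi--Golowich--Gunn to get radius $\min\{1-R_1,1-R_2\}-\epsilon$ with list size $L^2$, and take $R_1=R_2=\frac{1+R_{FQHC}}{2}$. Your remark that the minimum is at most the average, so the stated radius holds only for balanced (or nearly balanced) rates, is exactly what the paper's ``optimal choice of rates'' step implicitly assumes.
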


\begin{proof}
By \cite[Lemma 1.6]{bergamaschi2024approaching}, the folded quantum Hermitian code is a CSS code constructed from two folded Hermitian codes $C_1^{(\sigma,m)}$ and $C_2^{(\sigma,m)}$. Folded Hermitian codes are folded AG codes and are $(1 - R - \epsilon,N^{\mathcal{O}(1/\epsilon^2)})$-list-decodable by \cite[Theorem 3.3]{10.5555/2634074.2634208} for the choice of $l$ a squared prime power with $l = 1/\epsilon^2, m = l$ and $q = l^2$.
Now we have that 
\[
\tau_{C_1^{(\sigma,m)}} = 1 - R_{C_1^{(\sigma,m)}} - \epsilon,
\tau_{C_2^{(\sigma,m)}} = 1 - R_{C_2^{(\sigma,m)}} - \epsilon.
\]

By Lemma~1.5 of \cite{bergamaschi2024approaching},
the CSS construction yields a quantum code whose
list-decoding radius equals
\[
\tau_{FQHC}
=
\min \left\{
\tau_{C_1^{(\sigma,m)}},
\tau_{C_2^{(\sigma,m)}}
\right\}.
\]

We now determine the choice of classical rates that maximizes the quantum decoding radius for fixed $R_{FQHC} = R_{C_1^{(\sigma,m)}} + R_{(C_2^{(\sigma,m)}} - 1$.
Since
\[
\tau_{FQHC}
=
\min\left\{1 - R_{C_1^{(\sigma,m)}},\, 1 - R_{C_2^{(\sigma,m)}}\right\}
- \epsilon,
\]
maximizing $\tau_{FQHC}$ is equivalent to minimizing $\max\{R_{C_1^{(\sigma,m)}},\, R_{C_2^{(\sigma,m)}}\}$
subject to the constraint $R_{C_1^{(\sigma,m)}} + R_{C_2^{(\sigma,m)}} = 1 + R_{FQHC}.$ For fixed rate, the quantity $\max\{R_{C_1^{(\sigma,m)}},\, R_{C_2^{(\sigma,m)}}\}$ is minimized precisely when $R_{C_1^{(\sigma,m)}} = R_{C_2^{(\sigma,m)}}$. Any asymmetric choice strictly increases the larger rate and therefore strictly decreases the decoding radius. Hence, the optimal choice is $R_{C_1^{(\sigma,m)}} = R_{C_2^{(\sigma,m)}}
= \frac{1 + R_{FQHC}}{2}.$ Substituting into the expression for the decoding radius gives
\[
\tau_{FQHC}
=
1 - \frac{1 + R_{FQHC}}{2} - \epsilon
=
\frac{1 - R_{FQHC}}{2} - \epsilon.
\] 

Since each folded Hermitian code has list size $L= N^{\mathcal{O}(1/\epsilon^2)},$ by \cite[Lemma 1.5]{bergamaschi2024approaching}, the list size of the resulting FQHC is also $N^{\mathcal{O}(1/\epsilon^2)}$. Therefore, the FQHC is $(\frac{1- R_{FQHC}}{2} - \epsilon, N^{\mathcal{O}(1/\epsilon^2)})$-list decodable as desired. 
\end{proof}

The previous result shows that the folded quantum Hermitian codes are list-decodable codes that tolerate errors up to the quantum Singleton bound. We now have a result combining this with quantum list decoding. Some parts are inspired by the proof of \cite[Theorem 1.8]{bergamaschi2024approaching} which appears in the full arXiv version of the paper. 

\begin{proposition}  Let $\epsilon > 0$. Let $l = 1/\epsilon^2$ be a square prime power, $q = l^2 = 1/\epsilon^4$, $m = l$ and $\epsilon > 0$. Then, there exists a family of folded quantum Hermitian codes $Q^{(\sigma,m)}$ of rate $R_{FQHC}$ and relative distance $\delta \geq \frac{1-R_{FQHC} - \epsilon}{2}$ with alphabet size $(1/\epsilon)^{\mathcal{O}(1/\epsilon^2)}$ that are $(\frac{1-R_{FQHC}}{2} - \epsilon,N^{\mathcal{O}(1/\epsilon^2)})$-list-decodable, where $R_{FQHC} = R_{C_1^{(\sigma,m)}} + R_{C_2^{(\sigma,m)}} - 1$ in time $N^{\mathcal{O}(1/\epsilon^2)}$. Furthermore, there exists a deterministic quantum list-decoding algorithm that
outputs a polynomial-size list containing the correct Pauli error in time
$N^{\mathcal O(1/\epsilon^2)}$.
    
\end{proposition}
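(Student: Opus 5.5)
The plan is to assemble the proposition from four ingredients: the parameter lemma for $Q^{(\sigma,m)}$, Theorem~\ref{fqhcl}, the efficiency clause of \cite[Lemma 1.5]{bergamaschi2024approaching}, and a count of the alphabet size.

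\textbf{Choice of codes.} Fix $\epsilon$ and set $l=1/\epsilon^2$, $q=l^2$, $m=l$. Take the Hermitian curve over $\F_{q^2}$, so $n=q^3$ and $N=q^3/m$. For the folding automorphism I would use $\sigma_{\delta,\mu}$ with $\delta\neq 0$. I still need to check that its order is divisible by $m$ and that its orbits on the $q^3$ affine points have equal size $m$. If the order is $p$ rather than $m$, I would instead use a power or combination of these translations whose orbits have size $m$. Choose $r_1,r_2$ so that the rates are symmetric, $R_{C_1^{(\sigma,m)}}=R_{C_2^{(\sigma,m)}}=\frac{1+R_{FQHC}}{2}$, and so that $C_2^\perp\subseteq C_1$. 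Since $C_2^\perp=C(D,(q^3+q^2-q-2-r_2)P_\infty)$, this containment needs $q^3+q^2-q-2-r_2\le r_1$, which holds in the symmetric regime whenever $R_{FQHC}\ge 0$. The folded containment conditions of Construction~\ref{foldedH} then follow from $(C^{(m)})^\perp=(C^\perp)^{(m)}$.

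\textbf{Distance and alphabet.} By the parameter lemma, the quantum distance is at least $\lceil\min\{d(C_1),d(C_2)\}/m\rceil\ge (n-r_1)/m$ in relative terms. Riemann--Roch gives $\ell(rP_\infty)=r+1-g$ with $g=q(q-1)/2$, so $r_i/n\le R_i+g/n$. Because $g/n\approx 1/(2q)\le\epsilon$, this yields relative distance at least $1-\frac{1+R_{FQHC}}{2}-\epsilon$. In particular this is at least $\frac{1-R_{FQHC}-\epsilon}{2}$. Each folded symbol lies in $\F_{q^2}^m$, so the alphabet size is $q^{2m}=(1/\epsilon^4)^{2/\epsilon^2}=(1/\epsilon)^{\mathcal{O}(1/\epsilon^2)}$.

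\textbf{List decodability and algorithm.} The radius and list size are exactly Theorem~\ref{fqhcl}. For running time, I would invoke \cite[Theorem 3.3]{10.5555/2634074.2634208}, which supplies an algorithm running in time $N^{\mathcal{O}(1/\epsilon^2)}$ for each $C_i^{(\sigma,m)}$. The efficiency clause of \cite[Lemma 1.5]{bergamaschi2024approaching} then gives a quantum decoder with the same running time. The decoder measures the $X$- and $Z$-syndromes. Each syndrome is lifted to some vector in the corresponding coset; this is a linear-algebra step. It then runs the classical list decoders for $C_1$ and $C_2$ on these lifts and outputs all pairs of candidates that have weight at most $\tau N$. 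The resulting list has size $L^2=N^{\mathcal{O}(1/\epsilon^2)}$ and contains the true Pauli error up to stabilizer equivalence. Determinism holds because the Guruswami--Xing style decoder is deterministic once the subspace-evasive or pruning step is made explicit.

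\textbf{Main obstacle.} I expect the hardest step to be reconciling the hypotheses of \cite[Theorem 3.3]{10.5555/2634074.2634208} with this specific setting. Their folding uses a particular automorphism with orbit size $m$, and their deterministic polynomial list size may rely on precomputed data such as subspace designs. I would first check that $\sigma_{\delta,\mu}$, possibly replaced by a suitable element of the translation subgroup of order dividing $q^3$, satisfies their cyclic-orbit and field-extension conditions. If it does not, I would state the result for the folding automorphism they use and note that its orbits still partition the affine points.
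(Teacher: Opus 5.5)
Your proposal follows the same route as the paper's proof:
\begin{itemize}
\item radius and list size come from Theorem~\ref{fqhcl};
\item running time comes from \cite[Theorem 3.3]{10.5555/2634074.2634208}, pushed through \cite[Lemma 1.5]{bergamaschi2024approaching};
\item the alphabet is $\F_{q^2}^m$, of size $q^{2m}$;
\item the decoder is the same: measure the $X$- and $Z$-syndromes, list-decode each classically, and pair the two lists.
\end{itemize}
The paper only asserts the rate and relative distance ``by construction.'' You derive them: symmetric $r_1=r_2$, the containment $q^3+q^2-q-2-r_2\le r_1$ holding exactly when $R_{FQHC}\ge 0$, and Riemann's inequality for the distance. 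That is an improvement. Two points need repair.

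\emph{Distance arithmetic.} As written, your chain ends with $\frac{1-R_{FQHC}}{2}-\epsilon\ge\frac{1-R_{FQHC}-\epsilon}{2}$. This is false, since the left side is smaller by $\epsilon/2$. Your own estimate is much stronger than the one you used. We have $g/n=(q-1)/(2q^2)<1/(2q)=\epsilon^4/2$, so the relative distance is at least $\frac{1-R_{FQHC}}{2}-\frac{\epsilon^4}{2}\ge\frac{1-R_{FQHC}-\epsilon}{2}$.

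\emph{The folding automorphism.} The obstacle you flag is real, and your first remedy cannot work.
\begin{itemize}
\item By induction, $\sigma_{\delta,\mu}^k(x)=x+k\delta$ and $\sigma_{\delta,\mu}^k(y)=y+k\delta^q x+k\mu+\binom{k}{2}\delta^{q+1}$. So every nontrivial $\sigma_{\delta,\mu}$ has order $p$ for odd $p$, and order at most $4$ for $p=2$.
\item Hence for $m=l=p^{2t}\ge 9$, no $\sigma_{\delta,\mu}$ and no power or composite of them has orbits of size $m$.
\item In fact no automorphism of $H$ does. A $p$-subgroup of $\textup{Aut}(H/\F_{q^2})$ fixes a rational point, because $q^3+1\equiv 1\pmod{p}$. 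It is therefore conjugate into the group $\{\sigma_{\delta,\mu}\}$. So the hypothesis of Construction~\ref{foldedH} (an automorphism of order $m$ with all affine orbits of size $m$) cannot be met for $m=l\ge 9$.
\item A non-cyclic subgroup of order $m$ generated by several $\sigma_{\delta,\mu}$ does have orbits of size $m$. But the folded symbols are then not of the form $(f(P),f(P^{\sigma}),\dots,f(P^{\sigma^{m-1}}))$ for a single $\sigma$, which is exactly the structure the Guruswami--Xing decoder uses.
\end{itemize}
So you must take your second fallback. Fold along the automorphism actually used in \cite[Theorem 3.3]{10.5555/2634074.2634208}, in blocks of $m$ consecutive points of its orbits rather than full orbits. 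If that evaluation set is not all $q^3$ affine points, recheck $C_2^\perp\subseteq C_1$. The formula $C(D,rP_\infty)^\perp=C(D,(q^3+q^2-q-2-r)P_\infty)$ you relied on depends on $D$ being the full affine set.

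The paper's proof never addresses this point. It applies \cite[Theorem 3.3]{10.5555/2634074.2634208} to the $\sigma_{\delta,\mu}$-folded codes without checking that theorem's hypotheses.
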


\begin{proof} Let $C_1, C_2$ be two classical one-point folded Hermitian codes with $(C_2^{(\sigma,m)})^{\perp} = (C_2^{\perp})^{(\sigma,m)} \subseteq C_1^{(\sigma,m)}$. Let $Q^{(\sigma,m)} = \textup{CSS}(C_1^{(\sigma,m)}, C_2^{(\sigma,m)})$, constructed as in Construction \ref{foldedH}. By construction, $Q^{(\sigma,m)}$ has rate $R_{FQHC} = R_{C_1^{(\sigma,m)}} + R_{C_2^{(\sigma,m)}} - 1$ and relative distance $\delta \geq \frac{1-R_{FQHC} - \epsilon}{2}$. Since folded Hermitian codes in \cite{10.5555/2634074.2634208} are explicit and 
$(1-R-\epsilon, N^{\mathcal{O}(1/\epsilon^2)})$-list-decodable with runtime 
$N^{\mathcal{O}(1/\epsilon^2)}$, it follows from Theorem~\ref{fqhcl} and 
\cite[Lemma~1.5]{bergamaschi2024approaching} that the CSS code 
$Q^{(\sigma,m)}$ is \[\left(\frac{1-R_{FQHC}}{2}-\epsilon, N^{\mathcal{O}(1/\epsilon^2)}\right)\] list-decodable. Hence, $Q^{(\sigma,m)}$ has rate 
$R_{FQHC}$, relative distance $\delta \ge \frac{1-R_{FQHC}-\epsilon}{2}$, and constant
alphabet size $(1/\epsilon)^{\mathcal{O}(1/\epsilon^2)}$. 

It remains to show that there exists a deterministic decoding algorithm and to describe it. By \cite[Lemma 1.5]{bergamaschi2024approaching}, the quantum list decoding of the CSS code we constructed reduces to classical list decoding of $C_1^{(\sigma,m)}$ and $C_2^{(\sigma,m)}$. The quantum list decoder first measures all stabilizers to obtain classical syndromes, which provide linear constraints on the errors from $C_1^{(\sigma,m)}$ and $C_2^{(\sigma,m)}$. It then runs the classical list decoder from
\cite{10.5555/2634074.2634208} separately on each syndrome to obtain
candidate lists $L_X$ and $L_Z$. Finally, the classical list decoder then combines these candidate classical error patterns into a list of candidate Pauli operators by pairing each candidate from $C_1^{(\sigma,m)}$ with each candidate from $C_2^{(\sigma,m)}$. Pairing the two lists
produces a quantum list of candidate Pauli errors containing the true Pauli error. Since each classical list has size at most
$N^{\mathcal O(1/\epsilon^2)}$ and constant factors in the exponent are absorbed, the resulting list size of the folded quantum Hermitian code is at
most
\[
|L_X||L_Z|
\le
N^{\mathcal O(1/\epsilon^2)}
\cdot
N^{\mathcal O(1/\epsilon^2)}
=
N^{\mathcal O(1/\epsilon^2)}.
\]  Each classical decoding step runs in time
$N^{\mathcal O(1/\epsilon^2)}$. Therefore, the overall decoding procedure is
deterministic and runs in time $N^{\mathcal O(1/\epsilon^2)}$. This
establishes the existence of a deterministic quantum list-decoding
algorithm that outputs a polynomial-size list containing the correct
Pauli error. 
\end{proof}

\begin{remark} Folded quantum Hermitian codes have smaller alphabet sizes than folded quantum Reed-Solomon codes constructed in \cite[Theorem 1.8]{bergamaschi2024approaching} using quantum distance amplification. 
    To see this, let $\epsilon > 0$ and let $C_1^{(\sigma,m)}$ and $C_2^{(\sigma,m)}$ be any two folded Hermitian codes. Let $FQHC = \textup{CSS}(C_1^{(\sigma,m)},C_2^{(\sigma,m)})$ be a folded quantum Hermitian code. Then $Q$ has alphabet size $(1/\epsilon)^{\mathcal{O}(1/\epsilon^2)}$ (since they inherit the alphabet size from the two classical folded Hermitian codes). Let FQRS be a folded quantum Reed-Solomon code as in \cite[Theorem 1.8]{bergamaschi2024approaching}. The code FQRS has alphabet size $2^{\mathcal{O}(1/\epsilon^5)}$. Comparing $2^{\mathcal{O}(1/\epsilon^5)}$ to $(1/\epsilon)^{\mathcal{O}(1/\epsilon^2)}$, for any $\epsilon,$ $2^{\mathcal{O}(1/\epsilon^5)}$ is always larger than $(1/\epsilon)^{\mathcal{O}(1/\epsilon^2)}.$ Therefore, the alphabet size of an FQHC code without distance amplification is always smaller than a FQRS code with distance amplification. This demonstrates that in constructing folded quantum Hermitian codes, the alphabet size is constant, therefore quantum distance amplification is not needed. 
\end{remark}

We now provide a folded quantum code design based on the entanglement assisted construction presented in \cite{Galindo2021CorrectionTE}. This construction allows any two folded Hermitian codes to define a folded quantum Hermitian code without requiring $C_2^{\perp}\subseteq C_1$,  
$(C_2^{(\sigma,m)})^{\perp}\subseteq C_1^{(\sigma,m)},$
and the assumption that $(C^{(\sigma,m)})^{\perp}=(C^{\perp})^{(\sigma,m)}$ as in the previous section.
Here, $c$ denotes the number of required ebits, that is, the dimension of the entanglement. 

\begin{proposition} Let $C_1, C_2$ be $[n,k_1,d_1]$ and $[n,k_2,d_2]$ Hermitian codes and let $\widetilde{C}_1 = C_1^{(\sigma,m)}$, $\widetilde{C}_2=C_2^{(\sigma,m)}$ be the respective folded Hermitian codes. Then, there exists $[[\frac{n}{m},k_{EA},d_{EA}:c]]$ quantum code $Q$ with $c = \dim{\widetilde{C}_1} - \dim(\widetilde{C}_1\cap (\widetilde{C}_2)^{\perp}), k_{EA} = \frac{n}{m} - \dim(\widetilde{C}_1) - \dim(\widetilde{C}_2) + c $ and
\[ d_{EA} \geq \begin{cases}
    \min\{\textup{wt}((\widetilde{C}_1)^{\perp}), \textup{wt}((\widetilde{C}_2)^{\perp})\} &\textnormal{if }   (\widetilde{C}_1)^{\perp} \subseteq \widetilde{C}_2 \\
    \min\{\textup{w}_1, \textup{w}_2\} & \textnormal{otherwise}
\end{cases}\]
where $\textup{w}_1:=\textup{wt}((\widetilde{C}_1)^{\perp}\setminus (\widetilde{C}_2\cap (\widetilde{C}_1)^{\perp}))$ and \\$\textup{w}_2:= \textup{wt}((\widetilde{C}_2)^{\perp}\setminus ((\widetilde{C}_2)^{\perp}\cap \widetilde{C}_1)$.

\rmv{
\[ d_{EA} \geq \begin{cases}
    \min\{\textup{wt}((\widetilde{C}_1)^{\perp}), \textup{wt}((\widetilde{C}_2)^{\perp})\} \hbox{ if }   (\widetilde{C}_1)^{\perp} \subseteq \widetilde{C}_2 \\
    \min\{\textup{wt}((\widetilde{C}_1)^{\perp}\setminus (\widetilde{C}_2\cap (\widetilde{C}_1)^{\perp})),\\ \textup{wt}((\widetilde{C}_2)^{\perp}\setminus ((\widetilde{C}_2)^{\perp}\cap \widetilde{C}_1)\} \hspace{2ex} \hbox{otherwise}.
\end{cases}\]}    
\end{proposition}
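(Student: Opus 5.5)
We plan to derive this proposition by applying the general entanglement-assisted CSS construction of \cite{Galindo2021CorrectionTE} directly to the two folded codes. That construction takes any two linear codes $\widetilde{C}_1,\widetilde{C}_2$ of the same length over a common alphabet. It requires neither $\widetilde{C}_2^{\perp}\subseteq\widetilde{C}_1$ nor the compatibility $(C^{(\sigma,m)})^{\perp}=(C^{\perp})^{(\sigma,m)}$, and it produces an EAQECC whose parameters are given by exactly the formulas in the statement. So the work is to check that the folded Hermitian codes satisfy the hypotheses of that result, and then to read off the parameters.

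\emph{Step 1: setting.} First we fix the ambient space. The folded codes $\widetilde{C}_i=C_i^{(\sigma,m)}$ live in $(\F_{q^2}^m)^{n/m}$ and are $\F_{q^2}$-linear. Under the identification $(\F_{q^2}^m)^{n/m}\cong\F_{q^2}^n$, folding is a coordinate regrouping, so it is linear. We take the duals with respect to the folded inner product of Definition~\ref{def1}. This is the symmetric nondegenerate bilinear form $\sum_i c^{(i)}\cdot w^{(i)}$, and it agrees with the standard form on $\F_{q^2}^n$ after the regrouping. We then regard $\widetilde{C}_1,\widetilde{C}_2$ as linear codes of length $N=n/m$ over this alphabet, and give the qudit space for each folded coordinate as $(\mathbb{C}^{q^2})^{\otimes m}$. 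Dimensions are measured in units of $\F_{q^2}^m$-symbols, i.e.\ $\dim_{\F_{q^2}}/m$, consistent with the earlier conventions. Since every quantity in the proposition is a ratio of $\F_{q^2}$-dimensions by $m$, the fractional dimensions cause no problem.

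\emph{Step 2: apply the construction.} Next we invoke the result of \cite{Galindo2021CorrectionTE}. Stated for parity-check matrices $H_1,H_2$ of $\widetilde{C}_1^{\perp},\widetilde{C}_2^{\perp}$ in the standard Wilde--Brun form, it says the following. The number of ebits is $c=\operatorname{rank}(H_1H_2^{T})$, which equals $\dim\widetilde{C}_1-\dim(\widetilde{C}_1\cap\widetilde{C}_2^{\perp})$. Here we use the identity $\operatorname{rank}(G_1G_2^T)=\dim \widetilde{C}_1-\dim(\widetilde{C}_1\cap \widetilde{C}_2^{\perp})$ for generator matrices; we will verify this identity by computing the kernel of the map $\widetilde{C}_1\to \F^{k_2}$, $x\mapsto G_2x$. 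The logical dimension is $k_{EA}=N-\dim\widetilde{C}_1-\dim\widetilde{C}_2+c$, which matches the stated formula once the roles of the codes and their duals are aligned with the paper's convention.

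\emph{Step 3: distance.} For the distance, the case $\widetilde{C}_1^{\perp}\subseteq\widetilde{C}_2$ is the degenerate case, where the bound reduces to the minimum weights of the duals. Otherwise, an undetectable error must lie in $\widetilde{C}_1^{\perp}$ but outside the part absorbed by the stabilizer, namely $\widetilde{C}_2\cap\widetilde{C}_1^{\perp}$, and symmetrically for the other type. This gives $\min\{\mathrm{w}_1,\mathrm{w}_2\}$.

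The main obstacle is the bookkeeping in Step 1. We must show that the symplectic/trace inner product used in \cite{Galindo2021CorrectionTE} coincides with the folded inner product. If the construction is carried out over $\F_{q^2}$ on $n$ unfolded qudits and the results are then grouped, this coincidence holds. Otherwise the duality convention of Definition~\ref{def1} has to be used carefully. We would resolve this by performing the whole construction on the unfolded length-$n$ space and observing that folding merely regroups the qudits into $N$ blocks. Weights are then computed blockwise, and this is exactly how $\mathrm{wt}$ on the folded codes is defined.
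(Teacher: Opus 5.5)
Your proposal is correct and takes the paper's route. The paper gives no separate proof: it obtains the proposition by invoking the (corrected) entanglement-assisted CSS theorem of \cite{Galindo2021CorrectionTE} with the generator matrices of $\widetilde{C}_1,\widetilde{C}_2$ as stabilizer generators, which is your Steps~2--3, and your unfold--apply--regroup bookkeeping supplies a justification the paper leaves implicit (it is needed since the $\widetilde{C}_i$ are only $\F_{q^2}$-linear, and harmless because the form of Definition~\ref{def1} is the standard one on $\F_{q^2}^n$). One point to add: when $\widetilde{C}_1^{\perp}\subseteq\widetilde{C}_2$ you get $k_{EA}=0$ and the normalizer $\widetilde{C}_2^{\perp}\times\widetilde{C}_1^{\perp}$ coincides with its isotropic part, so the first bound is the conventional minimum nonzero weight of that normalizer rather than an output of your Step~3 argument.
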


\section{Conclusion}\label{conclusion}

In this paper, we constructed folded quantum Hermitian codes, showing that they are list-decodable tolerating errors up to the quantum Singleton bound. Our
construction achieves this over alphabet size
$(1/\epsilon)^{\mathcal O(1/\epsilon^2)}$, improving upon folded quantum
Reed--Solomon codes in terms of alphabet efficiency. Moreover, we
demonstrated that costly techniques such as distance amplification are
not required to obtain near-optimal decoding radius. It remains open to construct explicit folded quantum AG codes from other curves of positive genus and to study the trade-offs between rate,
distance, and alphabet size in these settings. \rmv{In particular, it would be
interesting to determine whether similar constructions can achieve
near-optimal list-decoding radius with smaller alphabets or improved
parameter regimes.} It also remains open explore other folded quantum code constructions beyond using the CSS and entanglement-assisted quantum code constructions.

\bibliographystyle{IEEEtran}
\bibliography{foldedAGpaper}

\end{document}